\documentclass[12pt]{article}

\usepackage{algorithm}				
\usepackage[noend]{algpseudocode}	
\usepackage{amsmath}				
\usepackage{amssymb}				
\usepackage{amsthm}					
\usepackage{caption}
\usepackage{color}
\usepackage[margin=1in]{geometry}
\usepackage[hidelinks]{hyperref}
\usepackage{titling}

\usepackage{complexity}
\usepackage{graphicx}
\usepackage{tikz}
\usetikzlibrary{matrix, shapes}

\theoremstyle{plain}				
\newtheorem{theorem}{Theorem}
\newtheorem{lemma}[theorem]{Lemma}
\newtheorem{proposition}[theorem]{Proposition}
\newtheorem{corollary}[theorem]{Corollary}

\theoremstyle{definition}			
\newtheorem{definition}[theorem]{Definition}

\theoremstyle{remark}				

\thanksmarkseries{alph}

\newcommand*\ellipsed[1]{\tikz[baseline=(char.base)]{\node[shape=rounded rectangle,draw,inner sep=1.25pt] (char) {#1};}}
\newcommand*\dashedell[1]{\tikz[baseline=(char.base)]{\node[shape=rounded rectangle,draw,dashed,inner sep=1.25pt] (char) {#1};}}

\usepackage{pifont}                             
\newcommand{\cmark}{\ding{51}}
\newcommand{\xmark}{\ding{55}}

\newcommand{\from}{\colon}

\newcommand{\TDFA}{\textsf{2DFA-4W}}

\newcommand{\TDFATW}{\textsf{2DFA-3W}}
\newcommand{\TDFATWOS}{\textsf{2DFA-3W-1$\Sigma$}}
\newcommand{\TDFATWOW}{\textsf{2DFA-2W}}
\newcommand{\TDFATWOWOS}{\textsf{2DFA-2W-1$\Sigma$}}
\newcommand{\TNFA}{\textsf{2NFA-4W}}

\newcommand{\TNFATW}{\textsf{2NFA-3W}}
\newcommand{\TNFATWOS}{\textsf{2NFA-3W-1$\Sigma$}}
\newcommand{\TNFATWOW}{\textsf{2NFA-2W}}
\newcommand{\TNFATWOWOS}{\textsf{2NFA-2W-1$\Sigma$}}

\title{Complexity of Universality and Related Decision Problems for Unary Two-Dimensional Automata}
\author{
	Taylor J. Smith \thanks{Department of Computer Science, St.\ Francis Xavier University, Antigonish, Nova Scotia, Canada. Email: \href{mailto:tjsmith@stfx.ca}{\texttt{tjsmith@stfx.ca}}.}
}
\date{\today}

\begin{document}


\maketitle

\begin{abstract}
A two-dimensional automaton is able to move its input head through its input word in four directions: upward, downward, leftward, and rightward. If we prevent the input head from moving upward, then we obtain a three-way two-dimensional automaton; preventing both upward and leftward movements results in a two-way two-dimensional automaton. While much is known about the decidability and complexity properties of the two-dimensional automaton model, the unary variant of this model is less studied.

We show that the universality, equivalence, and inclusion problems for unary three-way deterministic two-dimensional automata are \co\NP-hard, while for the corresponding two-way model, the universality, equivalence, inclusion, and disjointness problems are in \P. We further show that the universality, equivalence, and inclusion problems for unary two-way nondeterministic two-dimensional automata are \co\NP-hard and in \ELEMENTARY; and the disjointness problem for the same model is \NL-hard and in \ELEMENTARY. Finally, we establish the decidability of a bounded variant of the universality problem for unary three-way nondeterministic two-dimensional automata, and show that this variant problem is \co\NP-complete.

\medskip

\noindent\textit{Key words and phrases:} Complexity theory, decision problem, two-dimensional automaton, unary language, universality.

\medskip

\noindent\textit{MSC2020 classes:} 68Q45 (primary); 68Q17 (secondary).
\end{abstract}


\section{Introduction}\label{sec:introduction}

A two-dimensional automaton is a generalization of the classical one-dimensional finite automaton that takes as input arrays or matrices of symbols from an alphabet $\Sigma$. The two-dimensional automaton model was introduced by Blum and Hewitt~\cite{BlumHewitt19672DAutomata}.

The standard definition of a two-dimensional automaton, in which the input head is allowed to move in four directions through its input word, gives rise to a very powerful model of computation; indeed, two-dimensional automata are Turing-equivalent. Unfortunately, this computational power means that many common decision problems we can ask about this model are rendered undecidable. For this reason, we can restrict the model in certain ways: by preventing the input head of a two-dimensional automaton from moving upward or leftward. In the case where the input head cannot move upward, the model is called a three-way two-dimensional automaton; when the input head can move neither upward nor leftward, the model is called a two-way two-dimensional automaton. With these restricted variants of two-dimensional automata, we are able to regain many useful decidability properties.

We may also define a unary variant of the two-dimensional automaton model. When $\Sigma$ is a one-letter alphabet, the only interesting property of a two-dimensional input word is its dimension. In this case, we are able to regain even more decidability properties, although the landscape of decidability is still not yet settled for each unary model.

In this paper, we approach universality from two directions: we study the two-dimensional automaton model itself, which is a universal model of computation; and we study the decision problem of universality, which asks whether a given automaton accepts all input words over a particular alphabet, for unary restricted variants of two-dimensional automata. We also study the related decision problems of equivalence, inclusion, and disjointness for the same models. We prove various decidability and complexity results pertaining to unary two-dimensional automata: we show that the universality, equivalence, and inclusion problems are \co\NP-hard for the unary three-way deterministic model; the universality, equivalence, inclusion, and disjointness problems are in \P\ for the unary two-way deterministic model; the universality, equivalence, and inclusion problems for the unary two-way nondeterministic model are \coNP-hard and in \ELEMENTARY; and the disjointness problem for the same model is \NL-hard and in \ELEMENTARY. We also consider the long-standing problem of decidability of the unary three-way nondeterministic universality problem; while we do not resolve the question in full, we establish a positive result for a bounded variant of this problem and show that it is \co\NP-complete.


\section{Preliminaries}\label{sec:preliminaries}

In this section, we review basic definitions and results to be used in this paper. For further details about two-dimensional formal languages and automata, see the survey articles of Giammarresi and Restivo~\cite{GiammarresiRestivo19972DLanguages} and of Inoue and Takanami~\cite{Inoue19912DAutomataSurvey}, the book of Rosenfeld~\cite{Rosenfeld1979PictureLanguages}, or the past work of the author~\cite{Smith2019TwoDimensionalAutomata,Smith2021PhDThesis}.

A two-dimensional word is a finite array of cells, each labelled by a symbol from a finite alphabet $\Sigma$. If a two-dimensional word $W$ consists of $m \geq 1$ rows and $n \geq 1$ columns, then we say it has dimension $m \times n$, and we may index into row $i$ and column $j$ by writing $W[i,j]$. By convention, we take $W[0, 0]$ to be the top-left corner of $W$ and we take $W[m-1, n-1]$ to be the bottom-right corner of $W$. For a unary alphabet $\Sigma = \{\texttt{a}\}$, we represent a two-dimensional word having dimension $m \times n$ as $\texttt{a}^{m \times n}$. We denote the special language of all two-dimensional words over $\Sigma$ having $m \geq 1$ rows and $n \geq 1$ columns by the notation $\Sigma^{++}$.

When a two-dimensional word is given as input to a model of computation (i.e., written to an input tape), we label the cells around the two-dimensional word by a special boundary marker $\texttt{\#} \not\in \Sigma$. Perhaps the simplest model of computation used to process two-dimensional words is the two-dimensional automaton, which has a finite-state control and is capable of moving its input head in four directions through its input: up, down, left, and right (denoted $U$, $D$, $L$, and $R$, respectively). Without loss of generality, we may assume that the two-dimensional automaton begins its computation on an input word $W$ with its input head positioned at $W[0, 0]$. If the input head encounters a boundary marker \texttt{\#}, then we may assume it automatically returns to the input word from the border by undoing whatever previous move it made. We assume that a two-dimensional automaton accepts by entering a designated accepting state, $q_{\text{accept}}$, and that it halts and accepts immediately upon entering $q_{\text{accept}}$. Taking all of this together gives us the formal definition of our chosen model of computation.

\begin{definition}[Deterministic two-dimensional automaton]\label{def:2DFA4W}
A deterministic two-dimensional automaton (\TDFA) is a tuple $(Q, \Sigma, \delta, q_{0}, q_{\text{accept}})$, where $Q$ is a finite set of states, $\Sigma$ is the input alphabet (with $\texttt{\#} \not\in \Sigma$ acting as a boundary marker), $\delta \from (Q \setminus \{q_{\text{accept}}\}) \times (\Sigma \cup \{\texttt{\#}\}) \to Q \times \{U, D, L, R\}$ is the partial transition function, and $q_{0}, q_{\text{accept}} \in Q$ are the initial and accepting states, respectively.
\end{definition}

We can modify a two-dimensional automaton to be nondeterministic (\TNFA) in the usual way by changing the transition function to map to the power set $2^{Q \times \{U, D, L, R\}}$. A two-dimensional automaton $\mathcal{A}$ is said to be unary if $|\Sigma| = 1$. In this case, the only property of an input word the automaton is capable of recognizing is its dimension $m \times n$, so we can identify a unary two-dimensional language $L(\mathcal{A})$ by way of its accepted dimension set $D(\mathcal{A}) = \{(m,n) \in \mathbb{N}_{>0} \times \mathbb{N}_{>0} \mid \texttt{a}^{m \times n} \in L(\mathcal{A})\}$, where the notation $\mathbb{N}_{>0}$ denotes the set $\mathbb{N} \setminus \{0\}$. If a two-dimensional automaton is unary, we append to its notation the suffix \textsf{-1$\Sigma$}.

The suffix \textsf{-4W} in the previous notation indicates that our two-dimensional automata are ``four-way" automata; that is, they may move their input heads in any of the four directions $\{U, D, L, R\}$. By restricting the movement of the input head, we obtain a number of restricted variants of the two-dimensional automaton model.

\begin{definition}[Three-way two-dimensional automaton]\label{def:2DFA3W}
A three-way two-dimensional automaton (\TDFATW/\TNFATW) is a tuple $(Q, \Sigma, \delta, q_{0}, q_{\text{accept}})$ as in Definition~\ref{def:2DFA4W}, where the transition function $\delta$ is restricted to use only the directions $\{D, L, R\}$.
\end{definition}

\begin{definition}[Two-way two-dimensional automaton]\label{def:2DFA2W}
A two-way two-dimensional automaton (\TDFATWOW/\TNFATWOW) is a tuple $(Q, \Sigma, \delta, q_{0}, q_{\text{accept}})$ as in Definition~\ref{def:2DFA4W}, where the transition function $\delta$ is restricted to use only the directions $\{D, R\}$.
\end{definition}

Both the three-way and two-way two-dimensional automaton variants can be either deterministic or nondeterministic, depending on their transition function $\delta$. Since its input head movements are restricted, if a three-way two-dimensional automaton reads a boundary marker after making a downward move, it can read only boundary markers for the remainder of its computation; the same applies if a two-way two-dimensional automaton reads a boundary marker after making either a downward or a rightward move.

It is known that (four-way) two-dimensional automata can, in some sense, simulate one-dimensional Turing machines~\cite{BlumSakoda1977CapabilityFiniteAutomata}; this is achieved by modelling a two-dimensional automaton as a two-counter machine and using the known result that two-counter machines are Turing-equivalent~\cite{Minsky1967Computation}. Additionally, unlike with typical (one-dimensional) finite automata, deterministic two-dimensional automata are strictly weaker than nondeterministic two-dimensional automata~\cite{BlumHewitt19672DAutomata}. As one might expect, introducing greater restrictions limits the recognition power of a two-dimensional automaton, so that the three-way model is strictly weaker than the four-way model~\cite{Rosenfeld1979PictureLanguages} and, likewise, the two-way model is strictly weaker than the three-way model.

This paper considers common decision problems one may ask about a two-dimensional automaton $\mathcal{A}$ and its language $L(\mathcal{A})$. In addition to the membership problem (given an automaton $\mathcal{A}$ and an input word $W$, is $W \in L(\mathcal{A})$?) and the emptiness problem (is $L(\mathcal{A}) = \emptyset$?), we focus on the decision problems of universality (is $L(\mathcal{A}) = \Sigma^{++}$?), equivalence (is $L(\mathcal{A}) = L(\mathcal{B})$ for some other given automaton $\mathcal{B}$?), inclusion (is $L(\mathcal{A}) \subseteq L(\mathcal{B})$?), and disjointness (is $L(\mathcal{A}) \cap L(\mathcal{B}) = \emptyset$?). For unary two-dimensional automata, we may translate each of the definitions of these decision problems to use accepted dimension sets instead of languages without too much effort.


\begin{table}
\centering
\begin{tabular}{c | p{2.75cm} p{2.75cm} p{2.75cm} p{2.75cm}}
				& \TDFATW	& \TNFATW	& \TDFATWOW	& \TNFATWOW \\
\hline
membership		& \cmark	& \cmark	& \cmark	& \cmark \\
emptiness		& \cmark	& \cmark	& \cmark	& \cmark \\
universality	& \cmark	& \xmark	& \cmark	& \xmark \\
equivalence		& \textbf{?}& \xmark	& \cmark	& \xmark \\
inclusion		& \xmark	& \xmark	& \cmark	& \xmark \\
disjointness	& \xmark	& \xmark	& \cmark	& \textbf{?} \\
\end{tabular}
\caption{Decidability results for restricted two-dimensional automaton models.}
\label{tab:2Ddecidability}
\end{table}


\begin{table}
\centering
\begin{tabular}{c | p{2.75cm} p{2.75cm} p{2.75cm} p{2.75cm}}
				& \TDFATWOS	& \TNFATWOS		& \TDFATWOWOS	& \TNFATWOWOS \\
\hline
membership		& \cmark	& \cmark		& \cmark		& \cmark \\
emptiness		& \cmark	& \cmark		& \cmark		& \cmark \\
universality	& \cmark$_{\,\scriptsize{\ellipsed{\co\NP\text{-hard}}}}$	& \dashedell{$\ast\,\,^{\scriptsize{\co\NP}}_{\scriptsize{\co\NP\text{-hard}}}$}	& \cmark$^{\,\scriptsize{\ellipsed{\P}}}$		& \ellipsed{\cmark$\,^{\scriptsize{\ELEMENTARY}}_{\scriptsize{\co\NP\text{-hard}}}$} \\
equivalence		& \cmark$_{\,\scriptsize{\ellipsed{\co\NP\text{-hard}}}}$	& \textbf{?}	& \cmark$^{\,\scriptsize{\ellipsed{\P}}}$		& \ellipsed{\cmark$\,^{\scriptsize{\ELEMENTARY}}_{\scriptsize{\co\NP\text{-hard}}}$} \\
inclusion		& \cmark$_{\,\scriptsize{\ellipsed{\co\NP\text{-hard}}}}$	& \textbf{?}	& \cmark$^{\,\scriptsize{\ellipsed{\P}}}$		& \ellipsed{\cmark$\,^{\scriptsize{\ELEMENTARY}}_{\scriptsize{\co\NP\text{-hard}}}$} \\
disjointness	& \cmark	& \textbf{?}	& \cmark$^{\,\scriptsize{\ellipsed{\P}}}$		& \ellipsed{\cmark$\,^{\scriptsize{\ELEMENTARY}}_{\scriptsize{\NL\text{-hard}}}$} \\
\end{tabular}
\caption{Decidability results for unary restricted two-dimensional automaton models. New results presented in this paper are circled. A partial result presented in this paper is dash-circled. Superscripts indicate upper bounds. Subscripts indicate lower bounds.}
\label{tab:2Dunarydecidability}
\end{table}

In terms of decidability properties, (four-way) two-dimensional automata present nothing of particular interest; indeed, only the membership problem is decidable for this model. Thus, here we focus on the restricted variants of two-dimensional automata, which admit some more positive decidability results. Known decidability results are summarized in Table~\ref{tab:2Ddecidability} for general alphabets and in Table~\ref{tab:2Dunarydecidability} for unary alphabets, where \cmark\ denotes a decidable problem and \xmark\ denotes an undecidable problem. (The asterisk $\ast$ denotes a positive decidability result for a special bounded case discussed in Section~\ref{sec:3Wnondetbounded}; the general entry remains \textbf{?}.)

Of the decision problems known to be decidable for restricted two-dimensional automaton models, the complexity of deciding these problems has been studied in part. Since testing membership is essentially a reachability problem, the deterministic membership problem is \L-complete while the nondeterministic version is \NL-complete~\cite{Lindgren1998Complexity2DPatterns}. The \TNFATWOS\ emptiness problem is \NP-complete~\cite{SmithSalomaa20212D3WProperties}, and an upper bound of \NP\ follows for the \TDFATWOS\ emptiness problem. In the two-way case, the \TNFATWOW\ emptiness problem is in \P~\cite{SmithSalomaa20212D3WProperties}, and so this same bound transfers over to the \TNFATWOWOS\ and \TDFATWOWOS\ emptiness problems.


\section{Unary Three-Way Deterministic Lower Bounds}\label{sec:3Wlowerbound}

As Table~\ref{tab:2Dunarydecidability} observes, the problems of universality, equivalence, and inclusion are all known to be decidable for unary three-way deterministic two-dimensional automata. The decidability of the universality problem was established by Inoue and Takanami~\cite{InoueTakanami1980DecisionProblems2DAutomata}. The decidability of both the equivalence and the inclusion problems was established by Kinber~\cite{Kinber1985ThreeWayAutomataOneLetter}. Here, we consider the complexity of each of these decision problems for this model.

It is well known that Karp~\cite{Karp1972Reducibility} established the \NP-completeness of \textsc{3-CNF-Sat}: given a Boolean formula $\phi$ in conjunctive normal form, where each clause contains at most three literals, determine whether $\phi$ is satisfiable. As a consequence, the complementary problem \textsc{3-CNF-Unsat} is \co\NP-complete. We will use this fact to establish the following theorem giving a lower bound on the complexity of the \TDFATWOS\ universality problem.

\begin{theorem}\label{thm:unary2DFA3Wuniversality}
The universality problem for three-way deterministic two-dimensional automata over a unary alphabet is \co\NP-hard.

\begin{proof}
We prove this theorem via a reduction from \textsc{3-CNF-Unsat}. Let $\phi = C_{1} \wedge \dots \wedge C_{s}$ be a Boolean formula in conjunctive normal form with at most three literals per clause, and suppose the variables appearing in $\phi$ are $x_{1}, \dots, x_{t}$. Enumerate the first $t$ prime numbers $p_{1}, \dots, p_{t}$, and associate to each variable $x_{i}$ the $i$th prime number $p_{i}$ in this enumeration. By the prime number theorem, we know that $p_{t} \sim t\log(t)$, so $p_{i} \leq p_{t} \in O(t\log(t))$ for all $i \leq t$; in other terms, all primes in the enumeration are polynomially bounded in $t$.

We construct a unary three-way deterministic two-dimensional automaton $\mathcal{A}_{\phi}$. Given an input word $W$ of dimension $m \times n$, we interpret the number of columns $n$ as an encoding of an assignment of truth values to variables:
\begin{equation*}
x_{i} = 
\begin{cases}
\text{false},	& \text{if } n \equiv 0 \bmod p_{i}; \\
\text{true},	& \text{if } n \equiv 1 \bmod p_{i}.
\end{cases}
\end{equation*}
Let $(\lambda_{1}, \dots, \lambda_{\ell})$ be a list of occurrences of literals in $\phi$ listed clause-by-clause, and record in the finite-state control of $\mathcal{A}_{\phi}$ which literals end each clause. By our truth value assignment, a positive literal $x_{i}$ is false when $n \equiv 0 \bmod p_{i}$, while a negative literal $\neg x_{i}$ is false when $n \equiv 1 \bmod p_{i}$.

The automaton $\mathcal{A}_{\phi}$ begins by computing the residue for each literal occurrence on the fly. If the current literal $\lambda_{c}$ uses variable $x_{i}$, then on the row of the input word corresponding to $\lambda_{c}$, $\mathcal{A}_{\phi}$ moves from one boundary to the other while cycling through states $r_{0}^{i}, r_{1}^{i}, \dots, r_{p_{i} - 1}^{i}$. Each time the input head makes a leftward or rightward move and reads the unary alphabet symbol \texttt{a}, $\mathcal{A}_{\phi}$ transitions from state $r_{j}^{i}$ to state $r_{j + 1 \bmod p_{i}}^{i}$. Upon reaching the end of the row, $\mathcal{A}_{\phi}$ knows the value of $n \bmod p_{i}$; if $n \bmod p_{i} \not\in \{0, 1\}$, then $n$ does not encode an assignment of truth values and $\mathcal{A}_{\phi}$ defaults to accepting.

Then, $\mathcal{A}_{\phi}$ uses the sign of $\lambda_{c}$ stored in its finite-state control to determine whether $\lambda_{c}$ is true under the encoded assignment. The automaton stores in its finite-state control a flag value that is initialized to be false at the beginning of each clause, indicating whether some literal checked so far in that clause is true and updating if necessary after reading each row. If $\lambda_{c}$ is the last literal in the current clause and the flag remains false, then that clause is falsified and $\mathcal{A}_{\phi}$ accepts. Otherwise, if $\lambda_{c}$ is the last literal in the current clause and the flag is true, then $\mathcal{A}_{\phi}$ resets the flag to be false and proceeds to the next clause. If all clauses are satisfied, then $\mathcal{A}_{\phi}$ enters a dead state and does not accept.

Altogether, $\mathcal{A}_{\phi}$ requires one row in its input word for each literal occurrence. If the input word consists of fewer than $T = \ell$ rows, then $\mathcal{A}_{\phi}$ will reach the bottom boundary prematurely and default to accepting.

We now prove correctness.
\begin{itemize}
\item Suppose $\phi$ is an unsatisfiable formula, and let $W$ be a unary two-dimensional input word of dimension $m \times n$. If $m < T$, then $\mathcal{A}_{\phi}$ accepts. Likewise, if $m \geq T$ and $n$ is not a valid assignment encoding, then $\mathcal{A}_{\phi}$ accepts. Otherwise, $m \geq T$ and $n$ is a valid assignment encoding. Since $\phi$ is unsatisfiable, this assignment encoding falsifies at least one clause, and when $\mathcal{A}_{\phi}$ checks the first such falsified clause, it accepts. Therefore, every unary two-dimensional input word is accepted, and $L(\mathcal{A}_{\phi})$ is universal.

\item Suppose $\phi$ is a satisfiable formula, and let $\sigma$ denote a satisfying assignment of truth values to variables. For each $i$, let
\begin{equation*}
b_{i} = 
\begin{cases}
0,		& \text{if } \sigma(x_{i}) = \text{false}; \\
1,		& \text{if } \sigma(x_{i}) = \text{true}.
\end{cases}
\end{equation*}
Since each of $p_{1}, \dots, p_{t}$ is pairwise coprime, by the Chinese remainder theorem, we know that there exists a positive integer $n$ such that $n \equiv b_{i} \bmod p_{i}$ for each $i$. Choose any $m \geq T$. Then, in the corresponding unary two-dimensional input word of dimension $m \times n$, $\mathcal{A}_{\phi}$ will never accept. Thus, $L(\mathcal{A}_{\phi})$ is not universal.
\end{itemize}
Since \textsc{3-CNF-Unsat} is \co\NP-complete and we can construct $\mathcal{A}_{\phi}$ using $O(\ell \cdot t\log(t))$ states in time polynomial in $|\phi|$, testing universality is \co\NP-hard.
\end{proof}
\end{theorem}

From Theorem~\ref{thm:unary2DFA3Wuniversality}, we obtain similar lower bounds for the \TDFATWOS\ equivalence and inclusion problems.

\begin{corollary}\label{cor:unary2DFA3Wequivalenceinclusion}
The equivalence and inclusion problems for three-way deterministic two-dimensional automata over a unary alphabet are \co\NP-hard.

\begin{proof}
Let $\Sigma$ be a unary alphabet. We can construct a unary three-way deterministic two-dimensional automaton $\mathcal{U}$ recognizing $\Sigma^{++}$ in a straightforward manner: after reading the first input symbol, $\mathcal{U}$ transitions immediately to $q_{\text{accept}}$. By Theorem~\ref{thm:unary2DFA3Wuniversality}, we know that testing universality is \co\NP-hard.

Given a unary three-way deterministic two-dimensional automaton $\mathcal{A}$, construct a pair $(\mathcal{A}, \mathcal{U})$. Then deciding whether $L(\mathcal{A})$ is universal is at least as hard as deciding whether $L(\mathcal{A}) = L(\mathcal{U})$. Since $\mathcal{U}$ is fixed and constructing the pair $(\mathcal{A}, \mathcal{U})$ can be done in polynomial time, testing equivalence is \co\NP-hard.

Similarly, given $\mathcal{A}$ as before, construct a pair $(\mathcal{U}, \mathcal{A})$. Then deciding whether $L(\mathcal{A})$ is universal is at least as hard as deciding whether $L(\mathcal{U}) \subseteq L(\mathcal{A})$. Again, since $\mathcal{U}$ is fixed and constructing the pair $(\mathcal{U}, \mathcal{A})$ can be done in polynomial time, testing inclusion is \co\NP-hard.
\end{proof}
\end{corollary}

Note that we cannot apply the same logic to argue that testing \TDFATWOS\ emptiness or disjointness is \co\NP-hard. While this model is closed under complement, meaning we may test emptiness by way of testing complement-universality, there is no known polynomial-time construction to produce a polynomial-size unary three-way deterministic two-dimensional automaton for $\overline{L(\mathcal{A})}$, and testing disjointness is fundamentally a question of testing emptiness instead of universality.


\section[Unary Three-Way Nondeterministic Results for Bounded-Size Inputs]{Unary Three-Way Nondeterministic Results for\\Bounded-Size Inputs}\label{sec:3Wnondetbounded}

We now move on to considering the universality problem in the unary three-way nondeterministic case. For general alphabets, it is known that universality is undecidable~\cite{InoueTakanami1980DecisionProblems2DAutomata}, but the question remains unresolved for unary alphabets. Here, we do not attempt to resolve the question entirely, but instead we obtain a partial positive result by establishing decidability for a bounded variant of the universality problem.

For finite sets $R, C \subseteq \mathbb{N}_{>0}$, let $D_{R, C} = (R \times \mathbb{N}_{>0}) \cup (\mathbb{N}_{>0} \times C)$ denote the domain of all two-dimensional words having dimension $r \times \mathbb{N}_{>0}$ for $r \in R$ or having dimension $\mathbb{N}_{>0} \times c$ for $c \in C$.

As the first step toward obtaining our partial positive decidability result, we must develop two reductions depending on which dimension we choose to fix: the number of columns or the number of rows.

\paragraph{Fixed columns.} Fix a value $c \geq 1$. Define the fixed-column-universality problem to be as follows: given a unary three-way nondeterministic two-dimensional automaton $\mathcal{A}$ over the alphabet $\Sigma = \{\texttt{a}\}$, does $\texttt{a}^{m \times c} \in L(\mathcal{A})$ for all $m \geq 1$?

The following lemma allows us to reduce this problem to an equivalent problem framed in terms of a unary nondeterministic finite automaton (in one dimension), which will then enable us to use well-known properties of this model in establishing the main theorem of the section.

\begin{lemma}\label{lem:fixedcolumn}
For every unary three-way nondeterministic two-dimensional automaton $\mathcal{A}$ and for all $c \geq 1$, we can construct a unary nondeterministic finite automaton $\mathcal{B}_{\mathcal{A},c}$ over the alphabet $\Gamma = \{\texttt{b}\}$ such that, for all $m \geq 1$, $\texttt{b}^{m} \in L(\mathcal{B}_{\mathcal{A},c})$ if and only if $\texttt{a}^{m \times c} \in L(\mathcal{A})$.

\begin{proof}
Take a row-entry configuration to be a pair $(q, j)$, where $q \neq q_{\text{accept}}$ is a state of $\mathcal{A}$ and $j \in \{0, 1, \dots, c + 1\}$ is the column in which the input head enters that particular row in state $q$. Column numbers $0$ and $c + 1$ represent the left and right boundary columns, respectively.

Observe that, since $c$ is fixed, there are finitely many row-entry configurations. Thus, we may build a graph describing all input head movement within a given row. The vertices of this graph are labelled by pairs of states of $\mathcal{A}$ and a column number from the set $\{0, 1, \dots, c + 1\}$. Edges between vertices indicate leftward and rightward moves within the row.

To indicate downward moves from one row to another, define a relation $\rightsquigarrow_{c}$ on row-entry configurations in the following way: $(q, j) \rightsquigarrow_{c} (r, j')$ if, starting in a state $q$ and column $j$ of a row with $c$ columns, $\mathcal{A}$ makes a finite number of leftward and rightward moves within that row before making its first downward move into the next row at column $j'$, where it transitions to state $r$. Since each row-entry configuration graph is finite, we may compute the relation $\rightsquigarrow_{c}$ using standard reachability techniques.

Additionally, define two finite sets: $F_{c}$, consisting of all row-entry configurations from which $\mathcal{A}$ can accept before making a downward move; and $G_{c}$, consisting of all row-entry configurations from which $\mathcal{A}$ can accept if the current row is the last row (i.e., if the next downward move reaches the bottom boundary of the input word).

We can now construct the unary nondeterministic finite automaton $\mathcal{B}_{\mathcal{A},c}$. Each state in the state set of $\mathcal{B}_{\mathcal{A},c}$ is labelled by a row-entry configuration. The initial state corresponds to the row-entry configuration $(q_{0}, 1)$. We also add one accepting sink state $q_{\text{s}}$ that loops on \texttt{b}, and one standard accepting state $q_{\text{a}}$ with no outgoing transitions. For each $q \rightsquigarrow_{c} r$, add a transition $r \in \delta(q, \texttt{b})$. If $q \in F_{c}$, then also add a transition $q_{\text{s}} \in \delta(q, \texttt{b})$. Lastly, if $q \in G_{c}$, then also add a transition $q_{\text{a}} \in \delta(q, \texttt{b})$.

After reading $i$ input symbols without entering either $q_{\text{s}}$ or $q_{\text{a}}$, the reachable states of $\mathcal{B}_{\mathcal{A},c}$ correspond to the row-entry configurations in which $\mathcal{A}$ can enter row $i + 1$. Therefore, for each $m \geq 1$, $\mathcal{B}_{\mathcal{A},c}$ accepts the input word $\texttt{b}^{m}$ if and only if $\mathcal{A}$ accepts the input word $\texttt{a}^{m \times c}$.
\end{proof}
\end{lemma}

\paragraph{Fixed rows.} Fix a value $r \geq 1$. Define the fixed-row-universality problem to be as follows: given a unary three-way nondeterministic two-dimensional automaton $\mathcal{A}$ over the alphabet $\Sigma = \{\texttt{a}\}$, does $\texttt{a}^{r \times n} \in L(\mathcal{A})$ for all $n \geq 1$?

Again, the following lemma reduces this problem to an equivalent problem for another, more well-known computational model. While Lemma~\ref{lem:fixedcolumn} made use of a (one-way) nondeterministic finite automaton to traverse two-dimensional rows in the downward direction, here we must use a two-way nondeterministic finite automaton to move leftward and rightward between two-dimensional columns.

\begin{lemma}\label{lem:fixedrow}
For every unary three-way nondeterministic two-dimensional automaton $\mathcal{A}$ and for all $r \geq 1$, we can construct a unary two-way nondeterministic finite automaton $\mathcal{C}_{\mathcal{A},r}$ over the alphabet $\Gamma = \{\texttt{c}\}$ such that, for all $n \geq 1$, $\texttt{c}^{n} \in L(\mathcal{C}_{\mathcal{A},r})$ if and only if $\texttt{a}^{r \times n} \in L(\mathcal{A})$.

\begin{proof}
We construct the unary two-way nondeterministic finite automaton $\mathcal{C}_{\mathcal{A},r}$ in the following way. Given an input word of length $n$, the position of the input head of $\mathcal{C}_{\mathcal{A},r}$ represents the current column being read by $\mathcal{A}$. If the input head of $\mathcal{C}_{\mathcal{A},r}$ reads either the left or the right endmarker of its input word, then this is interpreted as $\mathcal{A}$ reading a boundary marker \texttt{\#}. The finite-state control of $\mathcal{C}_{\mathcal{A},r}$ keeps track of both the current state of $\mathcal{A}$ and a row number from the set $\{1, \dots, r, \bot\}$, where $\bot$ represents the bottom boundary row. Again, if the row value is $\bot$, then this is interpreted as $\mathcal{A}$ reading a boundary marker \texttt{\#}.

If the input head of $\mathcal{A}$ moves leftward or rightward, then $\mathcal{C}_{\mathcal{A},r}$ simulates this by moving its input head in the same direction. If $\mathcal{A}$ moves downward, then $\mathcal{C}_{\mathcal{A},r}$ does not move its input head, but does increment the row value stored in its finite-state control if it is less than $r$ or updates it to $\bot$ if it is equal to $r$.

Lastly, if $\mathcal{A}$ enters its accepting state $q_{\text{accept}}$, then $\mathcal{C}_{\mathcal{A},r}$ likewise accepts. Since the simulation performed by $\mathcal{C}_{\mathcal{A},r}$ simulates the states and input word indices visited by $\mathcal{A}$, $\mathcal{C}_{\mathcal{A},r}$ accepts the input word $\texttt{c}^{n}$ if and only if $\mathcal{A}$ accepts the input word $\texttt{a}^{r \times n}$.
\end{proof}
\end{lemma}

One may now be able to recognize that, by combining multiple instances of the fixed-column-universality problem (for various values of $c$) and the fixed-row-universality problem (for various values of $r$), we can obtain the aforementioned bounded variant of the universality problem for the unary three-way nondeterministic two-dimensional automaton model. We will refer to this as the $D_{R, C}$-universality problem, following the notation we introduced at the beginning of this section.

It is known that the universality problem for unary nondeterministic finite automata\footnote{Note that, in Section~\ref{sec:preliminaries}, we assumed each two-dimensional word contains at least one row and one column, while in one dimension, a universal finite automaton $\mathcal{M}$ accepts the empty word $\epsilon$. We account for this edge case by applying a preprocessing step that simulates $\mathcal{M}$ on nonempty input words and handles $\epsilon$ separately.} is \co\NP-complete; this result is due to Stockmeyer and Meyer~\cite{StockmeyerMeyer1973WordProblems}. Likewise, the universality problem for unary two-way nondeterministic finite automata is \co\NP-complete. Galil~\cite{Galil1976HierarchiesCompleteProblems} proved that the nonemptiness problem for unary two-way nondeterministic finite automata is in \NP, and since testing nonemptiness is equivalent to testing nonuniversality of the complement language, we can use a polynomial-size construction of Geffert, Mereghetti, and Pighizzini~\cite{Geffert2007ComplementingTwoWay} to obtain the complement unary language; hardness follows by the aforementioned one-way nondeterministic finite automaton result. Combining these facts with Lemma~\ref{lem:fixedcolumn} and Lemma~\ref{lem:fixedrow}, we obtain the following theorem.

\begin{theorem}\label{thm:DRCuniversalitycoNPcomplete}
The $D_{R, C}$-universality problem for three-way nondeterministic two-dimensional automata over a unary alphabet is \co\NP-complete for every fixed finite $R, C \subseteq \mathbb{N}_{>0}$, or when finite sets $R, C \subseteq \mathbb{N}_{>0}$ are given in unary as part of the input.

\begin{proof}
Let $\mathcal{A}$ be a unary three-way nondeterministic two-dimensional automaton. For each $c \in C$, construct a unary nondeterministic finite automaton $\mathcal{B}_{\mathcal{A}, c}$ as in the proof of Lemma~\ref{lem:fixedcolumn}. Likewise, for each $r \in R$, construct a unary two-way nondeterministic finite automaton $\mathcal{C}_{\mathcal{A}, r}$ as in the proof of Lemma~\ref{lem:fixedrow}. If $R$ and $C$ are provided in unary encoding, then we can construct this family of automata in time polynomial in the length of the input. Moreover, $L(\mathcal{A})$ is $D_{R, C}$-universal if and only if $\mathcal{B}_{\mathcal{A}, c}$ is universal for all $c \in C$ and $\mathcal{C}_{\mathcal{A}, r}$ is universal for all $r \in R$.

Since the universality problem for unary nondeterministic finite automata (as well as that of the two-way variant) is in \co\NP, the nonuniversality of either $\mathcal{B}_{\mathcal{A}, c}$ or $\mathcal{C}_{\mathcal{A}, r}$ is witnessed by a polynomial-size \NP\ certificate. Thus, the $D_{R, C}$-nonuniversality problem is in \NP, and so the $D_{R, C}$-universality problem is in \co\NP.

We establish hardness via a reduction from the universality problem for unary nondeterministic finite automata. Let $\mathcal{M}$ be a unary nondeterministic finite automaton. We construct a unary three-way nondeterministic two-dimensional automaton $\mathcal{A}_{\mathcal{M}}$ with the property that $L(\mathcal{A}_{\mathcal{M}})$ is $D_{R, C}$-universal if and only if $L(\mathcal{M})$ is universal.
\begin{itemize}
\item Suppose $R \neq \emptyset$, and choose some $r_{0} \in R$. Given an arbitrary input word $\texttt{a}^{m \times n}$, $\mathcal{A}_{\mathcal{M}}$ ignores the number of rows $m$ and simulates the computation of $\mathcal{M}$ on the input word $\texttt{a}^{n}$. Then, $\mathcal{A}_{\mathcal{M}}$ accepts $\texttt{a}^{m \times n}$ if and only if $\mathcal{M}$ accepts $\texttt{a}^{n}$. If $L(\mathcal{M})$ is not universal, then there exists some $n_{0} \geq 1$ such that $\texttt{a}^{n_{0}} \not\in L(\mathcal{M})$; this means that $\mathcal{A}_{\mathcal{M}}$ rejects the input word $\texttt{a}^{r_{0} \times n_{0}}$, and so $L(\mathcal{A}_{\mathcal{M}})$ is not $D_{R, C}$-universal.

\item Suppose $R = \emptyset$ but $C \neq \emptyset$, and choose some $c_{0} \in C$. Given an arbitrary input word $\texttt{a}^{m \times n}$, $\mathcal{A}_{\mathcal{M}}$ first checks whether $n \neq c_{0}$ and, if this is the case, $\mathcal{A}_{\mathcal{M}}$ accepts. Otherwise, if $n = c_{0}$, then $\mathcal{A}_{\mathcal{M}}$ simulates the computation of $\mathcal{M}$ on the input word $\texttt{a}^{m}$ and accepts at the bottom boundary if and only if $\mathcal{M}$ accepts. If $L(\mathcal{M})$ is not universal, then there exists some $m_{0} \geq 1$ such that $\texttt{a}^{m_{0}} \not\in L(\mathcal{M})$; this means that $\mathcal{A}_{\mathcal{M}}$ rejects the input word $\texttt{a}^{m_{0} \times c_{0}}$, and so $L(\mathcal{A}_{\mathcal{M}})$ is not $D_{R, C}$-universal.
\end{itemize}
Since the universality problem for unary nondeterministic finite automata is \co\NP-complete and we can reduce from this problem to the $D_{R, C}$-universality problem, testing $D_{R, C}$-universality is \co\NP-hard.
\end{proof}
\end{theorem}


\section{Unary Two-Way Results}\label{sec:2W}

When considering the two-way two-dimensional automaton model, it will be beneficial for us to recall a definition from the literature that will allow us to simplify our proofs slightly.

\begin{definition}[IBR-accepting two-way two-dimensional automaton~\cite{SmithSalomaa20232DProjection}]\label{def:IBR}
An IBR-accepting two-way two-dimensional automaton $\mathcal{A}$ is a tuple $(Q, \Sigma, \delta, q_{0}, q_{\text{accept}})$ as in Definition~\ref{def:2DFA2W} where, when the input head reads a boundary marker \texttt{\#}, $\mathcal{A}$ either enters $q_{\text{accept}}$ in the next transition or the transition is undefined.
\end{definition}

The abbreviation ``IBR-accepting" refers to the automaton ``immediately bottom-right accepting"; by this, we mean that the automaton immediately halts and accepts if, upon reading a boundary marker \texttt{\#}, the state $q_{\text{accept}}$ is reachable from its current state. Every two-way two-dimensional automaton can be converted to an equivalent IBR-accepting automaton of the same size, so throughout this section, when we refer to a two-way two-dimensional automaton, we will assume it is IBR-accepting.


\subsection{Deterministic Complexity}\label{subsec:2DFA2W}

One may have noted in Table~\ref{tab:2Ddecidability} that every common decision problem is decidable for two-way deterministic two-dimensional automata over a general alphabet. The \TDFATWOW\ emptiness problem can be decided by searching for an input word dimension yielding an accepting computation, and since the model is closed under complement, we obtain a decision procedure for the \TDFATWOW\ universality problem as well. Equivalence and inclusion were proved decidable by the author and Salomaa~\cite{SmithSalomaa20212D3WProperties}, and the author later extended decidability to the disjointness problem~\cite{Smith2021PhDThesis}. Each of these positive decidability results naturally transfers over to the \TDFATWOWOS\ problems.

Here, we extend these results to show that the unary variants of each of these problems can be decided efficiently. This efficiency is achieved by virtue of the simplified behaviour of the transition function over a unary alphabet allowing us to model each decision problem as an integer linear program.

\begin{theorem}\label{thm:2DFA2WP}
The universality, equivalence, inclusion, and disjointness problems for two-way deterministic two-dimensional automata over a unary alphabet are in \P.

\begin{proof}
Let $\mathcal{A}$ be a unary two-way deterministic two-dimensional automaton. Since $\mathcal{A}$ is defined over $\Sigma = \{\texttt{a}\}$, consider the sequence of states $q_{0}, q_{1}, q_{2}, \dots$ that arises by repeatedly reading the symbol \texttt{a}; that is, the sequence produced by computing $\delta(q_{i}, \texttt{a}) = (q_{i+1}, X_{i})$, where $X_{i} \in \{D, R\}$, until either $q_{\text{accept}}$ is reached, a nonaccepting state with no outgoing transition on \texttt{a} is reached, or a nonaccepting state repeats in the sequence. The sequence of $\{D, R\}$ moves is monotone, and since $\mathcal{A}$ is deterministic, the sequence of states is unique.

For each transition $i \geq 0$, let $d_{i} = |\{j < i \mid X_{j} = D\}|$ and $r_{i} = |\{j < i \mid X_{j} = R\}|$ denote the number of downward and rightward moves, respectively, made before transition $i$. Immediately before $\mathcal{A}$ follows transition $i$, its input head is at row $1 + d_{i}$ and column $1 + r_{i}$ in its input word.

We classify the sequence of states visited by $\mathcal{A}$ into one of the following cases and, for each input word of dimension $m \times n$, we place its dimension $(m, n)$ into an accepting set $A_{\mathcal{A}}$ or a nonaccepting set $R_{\mathcal{A}}$ as follows:
\begin{itemize}
\item Suppose the sequence of states eventually reaches $q_{\text{accept}}$. Let $t$ be the least index such that $q_{t} = q_{\text{accept}}$ in the sequence. For each $j < t - 1$, we record the input word dimensions $(m, n)$ for which transition $j$ would move the input head of $\mathcal{A}$ into the bottom or right boundary before reaching $q_{\text{accept}}$.
	\begin{itemize}
	\item If $X_{j} = D$, then this happens when the input word dimensions satisfy $m = 1 + d_{j}$ and $n \geq 1 + r_{j}$.
	
	\item If $X_{j} = R$, then this happens when the input word dimensions satisfy $m \geq 1 + d_{j}$ and $n = 1 + r_{j}$.
	\end{itemize}
We categorize each case as accepting or rejecting based on the IBR-accepting behaviour of $\mathcal{A}$ upon reaching the boundary. If $t = 0$, then $\mathcal{A}$ accepts every input word. Otherwise, all remaining dimensions satisfy $m \geq 1 + d_{t - 1}$ and $n \geq 1 + r_{t - 1}$; for input words having these dimensions, $q_{\text{accept}}$ is reached before the input head encounters a boundary, so $\mathcal{A}$ accepts and we categorize that dimension into $A_{\mathcal{A}}$.

\item Suppose the sequence of states reaches a nonaccepting state with no outgoing transition on \texttt{a}. Let $t$ be the least index such that $q_{t}$ is the nonaccepting state in question. For each $j < t$, we repeat the same dimension categorization procedure as before. All remaining dimensions satisfy $m \geq 1 + d_{t}$ and $n \geq 1 + r_{t}$; for input words having these dimensions, the nonaccepting state $q_{t}$ is reached before the input head encounters a boundary, so $\mathcal{A}$ does not accept and we categorize that dimension into $R_{\mathcal{A}}$.

\item Suppose a nonaccepting state repeats in the sequence. Let $t$ be the least index such that $q_{s} = q_{t}$ for some index $s < t$; the sequence of states up to $q_{s}$ forms a prefix, while the sequence from $q_{s}$ to $q_{t}$ forms a cycle. Iterating this cycle once moves the input head of $\mathcal{A}$ downward by $D_{0} = d_{t} - d_{s}$ cells and rightward by $R_{0} = r_{t} - r_{s}$ cells; since $\mathcal{A}$ must move its input head on each transition, at least one of $D_{0}$ and $R_{0}$ is positive. For each $j < s$, we repeat the same dimension categorization procedure as before. For each $s \leq j < t$, consider the position of the input head after $k \geq 0$ iterations of the cycle. Immediately before $\mathcal{A}$ follows transition $j$, its input head has moved downward by $kD_{0} + d_{j}$ cells and rightward by $kR_{0} + r_{j}$ cells, and we record the input word dimensions $(m, n)$ for which transition $j$ would move the input head of $\mathcal{A}$ into the bottom or right boundary.
	\begin{itemize}
	\item If $X_{j} = D$, then this happens when the input word dimensions satisfy $m = 1 + kD_{0} + d_{j}$ and $n \geq 1 + kR_{0} + r_{j}$ for $k \geq 0$.
	
	\item If $X_{j} = R$, then this happens when the input word dimensions satisfy $m \geq 1 + kD_{0} + d_{j}$ and $n = 1 + kR_{0} + r_{j}$ for $k \geq 0$.
	\end{itemize}
Again, we categorize each case as accepting or rejecting based on the IBR-accepting behaviour of $\mathcal{A}$ upon reaching the boundary.
\end{itemize}
Finally, we take the accepted dimension set $D(\mathcal{A})$ to be the union of the accepting cases in $A_{\mathcal{A}}$, and we also construct a `rejected dimension' set $X(\mathcal{A})$ by taking the union of the rejecting cases in $R_{\mathcal{A}}$. Since every sequence of states has length at most $|Q|$ before one of our three cases is encountered, our construction is polynomial in $|Q|$.

Now, observe that we can formulate each of our decision problems in terms of these two sets: universality asks whether $X(\mathcal{A}) = \emptyset$, equivalence asks whether $(D(\mathcal{A}) \cap X(\mathcal{B})) \cup (D(\mathcal{B}) \cap X(\mathcal{A})) = \emptyset$ for some other unary two-way deterministic two-dimensional automaton $\mathcal{B}$, inclusion asks whether $D(\mathcal{A}) \cap X(\mathcal{B}) = \emptyset$, and disjointness asks whether $D(\mathcal{A}) \cap D(\mathcal{B}) = \emptyset$.

Since the sets $A_{\mathcal{A}}$ and $R_{\mathcal{A}}$ on which each of these decision problem definitions is predicated are constructed according to a case-based system of linear equalities and inequalities in a fixed number of integer variables (e.g., $m$, $n$, $k$), and since the number of variables is bounded by a constant independent of $\mathcal{A}$, then we can model each of these decision problems as a type of constraint satisfaction problem. Since an integer linear program with a fixed number of variables can be solved in polynomial time~\cite{Lenstra1983IntegerProgrammingFixedVariables}, each of these decision problems is in \P.
\end{proof}
\end{theorem}


\subsection{Nondeterministic Decidability and Complexity}\label{subsec:2NFA2W}

In contrast to the deterministic case, and similar to the unary three-way nondeterministic model, the decidability status of the \TNFATWOWOS\ universality, equivalence, inclusion, and disjointness problems remained unresolved in the literature. Here, we prove that the unary variants of each of these problems are decidable for the model, and we additionally establish upper and lower complexity bounds for each problem.

Unfortunately, since we are dealing with the nondeterministic model here, we cannot use the same techniques as in Section~\ref{subsec:2DFA2W} to establish decidability or prove complexity bounds; we could still analyze the sequence of states visited by an automaton in repeatedly reading the same symbol, but this sequence is no longer guaranteed to be unique. Instead, we will analyze a different kind of sequence: since every computation of a two-way two-dimensional automaton consists of some combination of downward and rightward moves, we can study the sequences of moves over $\{D, R\}^{*}$ that arise from such a computation and use existing tools to decide each of our problems.

To begin, we recall a helpful result from the literature, which we present here without proof.

\begin{proposition}[Ginsburg and Spanier~\cite{GinsburgSpanier1966PresburgerFormulas}]\label{prop:semilinearpresburger}
A set is semilinear if and only if the set is definable in Presburger arithmetic.
\end{proposition}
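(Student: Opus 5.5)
The proposition is quoted from Ginsburg and Spanier, but I would prove it in two directions. The idea is to use quantifier elimination for Presburger arithmetic to reduce the hard direction to a statement about solution sets of linear systems over $\mathbb{N}$.

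\emph{Semilinear $\Rightarrow$ definable.} Take a linear set $L = \{\mathbf{b} + k_{1}\mathbf{p}_{1} + \dots + k_{s}\mathbf{p}_{s} \mid k_{1}, \dots, k_{s} \in \mathbb{N}\} \subseteq \mathbb{N}^{d}$. It is defined by the formula
\[
\exists k_{1} \cdots \exists k_{s} \; \bigwedge_{\iota = 1}^{d} x_{\iota} = b_{\iota} + k_{1}p_{1,\iota} + \dots + k_{s}p_{s,\iota}.
\]
Multiplication by the constants $p_{j,\iota}$ is repeated addition, so this is a legitimate Presburger formula. A semilinear set is a finite union of linear sets, and is therefore defined by the disjunction of the corresponding formulas.

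\emph{Definable $\Rightarrow$ semilinear.} First I would invoke the classical quantifier elimination for Presburger arithmetic, in the language extended by congruence predicates $\equiv_{q}$ for each modulus $q$, due to Presburger and Cooper; I would take this as known. Then every definable set $S \subseteq \mathbb{N}^{d}$ is defined by a quantifier-free Boolean combination of two kinds of atom:
\begin{itemize}
\item linear inequalities $\mathbf{a}\cdot\mathbf{x} \le c$ with integer coefficients;
\item congruences $\mathbf{a}\cdot\mathbf{x} \equiv c \pmod q$.
\end{itemize}
Next I would push negations down to the atoms. The negation of an inequality is again an inequality. The negation of a congruence is a finite disjunction of congruences with the other residues. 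Putting the result in disjunctive normal form writes $S$ as a finite union of sets, each defined by a conjunction of atoms. Since a finite union of semilinear sets is semilinear, it remains to treat one conjunction.

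For a single conjunction, I would replace each congruence $\mathbf{a}\cdot\mathbf{x} \equiv c \pmod q$ by the equation $\mathbf{a}\cdot\mathbf{x} = c + qy - qy'$ with fresh variables $y, y' \in \mathbb{N}$. I would also turn each inequality into an equation by adding a slack variable. The conjunction then becomes the image, under a coordinate projection, of the solution set $\{\mathbf{z} \in \mathbb{N}^{e} \mid M\mathbf{z} = \mathbf{c}\}$ of one linear system. The image of a linear set under a linear map $\mathbb{N}^{e} \to \mathbb{N}^{d}$ is linear: map the base and periods. So it suffices to show that this solution set is semilinear.

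This last step is the main obstacle. I would handle it with Dickson's lemma, as follows.
\begin{itemize}
\item The set $P$ of $\le$-minimal solutions of $M\mathbf{z} = \mathbf{c}$ is finite.
\item The set $H$ of $\le$-minimal nonzero solutions of the homogeneous system $M\mathbf{z} = \mathbf{0}$ is finite.
\item Claim: every solution has the form $\mathbf{p} + \sum_{\mathbf{h} \in H} k_{\mathbf{h}}\mathbf{h}$ with $\mathbf{p} \in P$ and $k_{\mathbf{h}} \in \mathbb{N}$.
\end{itemize}
To prove the claim, take a solution $\mathbf{z}$ and some $\mathbf{p} \in P$ with $\mathbf{p} \le \mathbf{z}$. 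Then $\mathbf{z} - \mathbf{p}$ is a solution of the homogeneous system. I would show by induction on the sum of its entries that it is a sum of elements of $H$. Any nonzero homogeneous solution $\mathbf{w}$ lies above some $\mathbf{h} \in H$, and $\mathbf{w} - \mathbf{h}$ is again a homogeneous solution of smaller size.

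The converse inclusion is immediate, because adding a homogeneous solution to a solution gives a solution. Hence the solution set is the finite union $\bigcup_{\mathbf{p} \in P}(\mathbf{p} + H^{*})$, where $H^{*}$ is the set of $\mathbb{N}$-combinations of elements of $H$. This is semilinear, which completes the argument.
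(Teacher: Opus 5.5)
The paper does not prove this proposition. It is quoted from Ginsburg and Spanier and explicitly presented without proof, so there is nothing of the paper's to match. Your argument is correct, modulo the quantifier-elimination theorem you take as known.

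The easy direction is fine. It is also the only direction the paper uses, in Lemma~\ref{lem:2NFA2Wpresburger}, to turn a semilinear Parikh image into a formula.

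In the hard direction each step checks out:
the negation of an atom is again a disjunction of atoms;
the variables $y, y'$ and the slack variables correctly exhibit a conjunction as a projection of the $\mathbb{N}$-solutions of one system $M\mathbf{z} = \mathbf{c}$;
linear images of linear sets are linear;
and the decomposition $\bigcup_{\mathbf{p} \in P}(\mathbf{p} + H^{*})$ is right, including the well-foundedness step that supplies some $\mathbf{p} \le \mathbf{z}$.
If you run the elimination over $\mathbb{Z}$ rather than $\mathbb{N}$, just relativise the quantifiers to $x \geq 0$ and add $x_{\iota} \geq 0$ for the free variables.

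The other standard route avoids quantifier elimination altogether. You induct on the structure of the formula:
atoms define solution sets of linear equations, which are semilinear by the same Dickson argument;
$\exists$ corresponds to projection;
and $\wedge$, $\neg$ correspond to intersection and complement of semilinear sets.
That route stays inside the algebra of semilinear sets. However, it needs closure of semilinear sets under intersection and, above all, under complement as a nontrivial input.

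Your route needs only closure under finite unions and linear images. It even yields closure under complement as a corollary, since definable sets are trivially closed under negation. The price is that you take the full strength of Presburger's quantifier elimination on trust.
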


With this result, we can prove the following technical lemma.

\begin{lemma}\label{lem:2NFA2Wpresburger}
Given a unary two-way nondeterministic two-dimensional automaton $\mathcal{A}$ over an alphabet $\Sigma = \{\texttt{a}\}$, one can obtain a Presburger arithmetic formula $\phi_{\mathcal{A}}(m, n)$ such that, for all $m, n \geq 1$, $\phi_{\mathcal{A}}(m, n)$ is true if and only if $\texttt{a}^{m \times n} \in L(\mathcal{A})$.

\begin{proof}
First, observe that since we are operating over a unary alphabet, checking whether $\texttt{a}^{m \times n} \in L(\mathcal{A})$ is equivalent to checking whether $(m, n) \in D(\mathcal{A})$.

If $q_{0} = q_{\text{accept}}$, then take $\phi_{\mathcal{A}}(m, n) = (m \geq 1 \wedge n \geq 1)$. Otherwise, every computation of $\mathcal{A}$ corresponds to a sequence of moves $v \in \{D, R\}^{*}$, so we may represent computations in terms of the Parikh map of this sequence of moves $\Psi(v) = (|v|_{D}, |v|_{R})$. If $\Psi(v) = (d, r)$, then the sequence of moves $v$ has caused the input head of $\mathcal{A}$ to move $d$ rows downward and $r$ columns rightward within its input word.

We now classify accepting computations of $\mathcal{A}$ into one of the following cases:
\begin{itemize}
\item Suppose $\mathcal{A}$ accepts within the input word. For each accepting transition $e$ of the form $(q_{\text{accept}}, X) \in \delta(p, \texttt{a})$, where $X \in \{D, R\}$, let $V^{\text{int}}_{e} \subseteq \{D, R\}^{*}$ denote the set of sequences of moves sending $\mathcal{A}$ from $q_{0}$ to $p$. Observe that $V^{\text{int}}_{e}$ is a regular language, so by Parikh's theorem~\cite{Parikh1966OnCFLs}, its Parikh map $\Psi(V^{\text{int}}_{e})$ is semilinear. Then, by Proposition~\ref{prop:semilinearpresburger}, there exists a Presburger arithmetic formula $\phi^{\text{int}}_{e}(d, r)$ that is true if and only if there exists some $v \in V^{\text{int}}_{e}$ such that $\Psi(v) = (d, r)$. The set of all dimensions $(m, n)$ where $\mathcal{A}$ accepts within the input word having dimension $m \times n$ is then given by the formula
\begin{equation*}
\phi^{\text{int}}(m, n) = \bigvee_{e} \exists d \ \exists r \ (\phi^{\text{int}}_{e}(d, r) \wedge m \geq d + 1 \wedge n \geq r + 1).
\end{equation*}

\item Suppose $\mathcal{A}$ accepts upon reaching the bottom boundary of the input word. The argument is similar to that used in the previous case, except we now consider the set of sequences of moves where the last transition moves the input head of $\mathcal{A}$ downward into the boundary. In this case, the set of all dimensions $(m, n)$ where $\mathcal{A}$ accepts at the bottom boundary of the input word having dimension $m \times n$ is given by the formula
\begin{equation*}
\phi^{\text{bot}}(m, n) = \bigvee_{e} \exists d \ \exists r \ (\phi^{\text{bot}}_{e}(d, r) \wedge m = d \wedge n \geq r + 1).
\end{equation*}

\item Suppose $\mathcal{A}$ accepts upon reaching the right boundary of the input word. The argument is similar to those used in the previous two cases, except we now consider the set of sequences of moves where the last transition moves the input head of $\mathcal{A}$ rightward into the boundary. In this case, the set of all dimensions $(m, n)$ where $\mathcal{A}$ accepts at the right boundary of the input word having dimension $m \times n$ is given by the formula
\begin{equation*}
\phi^{\text{rig}}(m, n) = \bigvee_{e} \exists d \ \exists r \ (\phi^{\text{rig}}_{e}(d, r) \wedge m \geq d + 1 \wedge n = r).
\end{equation*}
\end{itemize}
Finally, we take $\phi_{\mathcal{A}}(m, n) = \phi^{\text{int}}(m, n) \vee \phi^{\text{bot}}(m, n) \vee \phi^{\text{rig}}(m, n)$.

Now, suppose $\texttt{a}^{m \times n} \in L(\mathcal{A})$; equivalently, suppose $(m, n) \in D(\mathcal{A})$. Since $\mathcal{A}$ is a two-way two-dimensional automaton, and therefore IBR-accepting, the accepting computation must finish either within the input word, at the bottom boundary, or at the right boundary. In any case, the corresponding sequence of moves belongs to one of $V^{\text{int}}_{e}$, $V^{\text{bot}}_{e}$, or $V^{\text{rig}}_{e}$, and so $\phi_{\mathcal{A}}(m, n)$ is true.

Conversely, suppose $\phi_{\mathcal{A}}(m, n)$ is true. Then one of $\phi^{\text{int}}(m, n)$, $\phi^{\text{bot}}(m, n)$, or $\phi^{\text{rig}}(m, n)$ is true. If $\phi^{\text{int}}(m, n)$ is true, then there exists a sequence of moves $v$ corresponding to an accepting computation of $\mathcal{A}$, and the inequalities $m \geq d + 1$ and $n \geq r + 1$ ensure acceptance occurs before the input head reaches a boundary. The arguments are analogous for the cases where $\phi^{\text{bot}}(m, n)$ or $\phi^{\text{rig}}(m, n)$ are true. Thus, $(m, n) \in D(\mathcal{A})$ and $\texttt{a}^{m \times n} \in L(\mathcal{A})$.
\end{proof}
\end{lemma}

It is well known that Presburger arithmetic is decidable~\cite{Presburger1930Arithmetik}, and Oppen~\cite{Oppen1978UpperBoundPresburger} provided a decision procedure for Presburger arithmetic that belongs to the complexity class \ELEMENTARY. Using Lemma~\ref{lem:2NFA2Wpresburger} to reframe everything in the system of Presburger arithmetic, we can make use of this existing decision procedure.

\begin{theorem}\label{thm:unary2NFA2Wuniversality}
The universality, equivalence, inclusion, and disjointness problems for two-way nondeterministic two-dimensional automata over a unary alphabet are in \ELEMENTARY.

\begin{proof}
Let $\mathcal{A}$ and $\mathcal{B}$ be unary two-way nondeterministic two-dimensional automata. Use Lemma~\ref{lem:2NFA2Wpresburger} to obtain Presburger arithmetic formulas $\phi_{\mathcal{A}}(m, n)$ and $\phi_{\mathcal{B}}(m, n)$ corresponding to the accepted dimension sets $D(\mathcal{A})$ and $D(\mathcal{B})$, respectively. Then, we may express each of our decision problems as Presburger arithmetic formulas in the following way:
\begin{itemize}
\item Universality: $\forall m \ \forall n \ ((m \geq 1 \wedge n \geq 1) \Rightarrow \phi_{\mathcal{A}}(m, n))$.
\item Equivalence: $\forall m \ \forall n \ ((m \geq 1 \wedge n \geq 1) \Rightarrow (\phi_{\mathcal{A}}(m, n) \Leftrightarrow \phi_{\mathcal{B}}(m, n)))$.
\item Inclusion: $\forall m \ \forall n \ ((m \geq 1 \wedge n \geq 1) \Rightarrow (\phi_{\mathcal{A}}(m, n) \Rightarrow \phi_{\mathcal{B}}(m, n)))$.
\item Disjointness: $\forall m \ \forall n \ ((m \geq 1 \wedge n \geq 1) \Rightarrow \neg(\phi_{\mathcal{A}}(m, n) \wedge \phi_{\mathcal{B}}(m, n)))$. \qedhere
\end{itemize}
\end{proof}
\end{theorem}

Unfortunately, if we wish to improve this upper bound substantially, we would likely require a novel decision procedure that does not rely on Presburger arithmetic. As Fischer and Rabin~\cite{FischerRabin1974SuperExponentialPresburger} showed, there exists a constant $c > 0$ such that every decision procedure for Presburger arithmetic has a worst-case time complexity lower bound of $2^{2^{cn}}$, and so any decision procedure using Presburger arithmetic will therefore still have a multiply exponential time complexity.

On the other hand, we can obtain lower bounds on the complexity of each of these decision problems via straightforward reductions, though the gap between our upper and lower bounds is quite large.

\begin{theorem}\label{thm:unary2NFA2WcoNPhard}
The universality, equivalence, and inclusion problems for two-way nondeterministic two-dimensional automata over a unary alphabet are \co\NP-hard.

\begin{proof}
We prove this theorem via a reduction from the universality problem for unary nondeterministic finite automata. Let $\mathcal{M}$ be a unary nondeterministic finite automaton. We construct a unary two-way nondeterministic two-dimensional automaton $\mathcal{A}_{\mathcal{M}}$ with the property that $L(\mathcal{A}_{\mathcal{M}})$ is universal if and only if $L(\mathcal{M})$ is universal. Given an arbitrary input word $\texttt{a}^{m \times n}$, $\mathcal{A}_{\mathcal{M}}$ ignores the number of columns and simulates the computation of $\mathcal{M}$ on the input word $\texttt{a}^{m}$. Then, $\mathcal{A}_{\mathcal{M}}$ accepts $\texttt{a}^{m \times n}$ if and only if $\mathcal{M}$ accepts $\texttt{a}^{m}$. Since the universality problem for unary nondeterministic finite automata is \co\NP-complete and we can reduce from this problem to the universality problem for two-way nondeterministic two-dimensional automata, testing universality is \co\NP-hard.

Next, as in the proof of Corollary~\ref{cor:unary2DFA3Wequivalenceinclusion}, construct a unary two-way nondeterministic two-dimensional automaton $\mathcal{U}$ recognizing the language $\Sigma^{++}$. Since deciding whether $L(\mathcal{A})$ is universal is at least as hard as deciding whether $L(\mathcal{A}) = L(\mathcal{U})$ or $L(\mathcal{U}) \subseteq L(\mathcal{A})$, testing equivalence and inclusion is likewise \co\NP-hard.
\end{proof}
\end{theorem}

For the remaining lower bound, we recall the fact that \textsc{st-Connectivity}, the problem of determining whether a path exists between two vertices in a directed graph, is \NL-complete; this result is due to Jones~\cite{Jones1975SpaceBoundedReducibility}.

\begin{theorem}\label{thm:unary2NFA2WNLhard}
The disjointness problem for two-way nondeterministic two-dimensional automata over a unary alphabet is \NL-hard.

\begin{proof}
We prove this theorem via a reduction from \textsc{st-Nonconnectivity}; this problem is \co\NL-complete, and by the Immerman--Szelepcs\'{e}nyi theorem~\cite{Immerman1988NLClosedComplementation,Szelepcsenyi1988ForcedEnumeration}, it is also \NL-complete.

Let $G = (V, E)$ be a directed graph having distinguished vertices $s$ and $t$. We construct a unary two-way nondeterministic two-dimensional automaton $\mathcal{A}_{G}$ in the following way: each state $q_{v}$ of $\mathcal{A}_{G}$ corresponds to a vertex $v \in V$, the initial state corresponds to vertex $s$, the accepting state corresponds to vertex $t$, and for each edge $(u, v) \in E$, there exists a transition $(q_{v}, D) \in \delta(q_{u}, \texttt{a})$.

If $t$ is reachable from $s$ via a path having length $\ell$, then $\mathcal{A}_{G}$ accepts a unary input word having dimension $(\ell + 1) \times 1$. If $t$ is not reachable, however, then $L(\mathcal{A}_{G}) = \emptyset$. If we take $\mathcal{U}$ to be the unary two-way nondeterministic two-dimensional automaton constructed in the proof of Theorem~\ref{thm:unary2NFA2WcoNPhard}, then $L(\mathcal{A}_{G}) \cap L(\mathcal{U}) = \emptyset$ if and only if $t$ is not reachable from $s$ in $G$.

Since \textsc{st-Nonconnectivity} is \NL-complete and we can construct $\mathcal{A}_{G}$ in space logarithmic in $|G|$, testing disjointness is \NL-hard.
\end{proof}
\end{theorem}


\section{Conclusion}\label{sec:conclusion}

In this paper, we studied various decision problems for unary two-dimensional automata and established complexity bounds for these problems. We showed that the \TDFATWOS\ universality, equivalence, and inclusion problems were \co\NP-hard; the \TDFATWOWOS\ universality, equivalence, inclusion, and disjointness problems were in \P; the \TNFATWOWOS\ universality, equivalence, and inclusion problems were \coNP-hard and in \ELEMENTARY; and the \TNFATWOWOS\ disjointness problem was \NL-hard and in \ELEMENTARY. We also established that a bounded variant of the \TNFATWOS\ universality problem was decidable, and proved that it was \co\NP-complete.

There still remain certain questions that could guide future research in this area and allow us to better understand the capability of the unary two-dimensional automaton model. For instance, what is an upper bound on the complexity of the \TDFATWOS\ universality, equivalence, and inclusion problems? In addition, now that the \TNFATWOWOS\ universality, equivalence, inclusion, and disjointness problems are known to be decidable, it would be worth studying alternative decision procedures for these problems that do not rely on Presburger arithmetic; for such a conceptually simple model of computation, it would be somewhat surprising if the only known procedure to decide common problems for this model is in \ELEMENTARY. Lastly, there remain many open questions surrounding \TNFATWOS\ decision problems: is the universality problem decidable in the general case, and are each of the equivalence, inclusion, and disjointness problems decidable as well?


\section*{Acknowledgements}\label{sec:acknowledgements}

This research was supported by the Natural Sciences and Engineering Research Council of Canada (NSERC) Discovery Grant RGPIN-2024-04799.


\bibliographystyle{plain}
\bibliography{./References}


\end{document}